\documentclass[12pt]{article}
\usepackage[sectionbib]{natbib}
\usepackage{array,epsfig,fancyheadings,rotating}
\usepackage[]{hyperref}
\usepackage{sectsty, secdot}
\sectionfont{\fontsize{12}{14pt plus.8pt minus .6pt}\selectfont}
\renewcommand{\theequation}{\thesection\arabic{equation}}
\subsectionfont{\fontsize{12}{14pt plus.8pt minus .6pt}\selectfont}

\usepackage{amsmath}
\usepackage{amssymb}
\usepackage{amsfonts}
\usepackage{multirow}
\usepackage{amsthm}
\usepackage{booktabs}     
\usepackage{url}          

\newtheorem{theorem}{Theorem}

\theoremstyle{definition}

\newtheorem{assumption}{Assumption}

\begin{document}







\fontsize{12}{14pt plus.8pt minus .6pt}\selectfont \vspace{0.8pc}
\centerline{\large\bf LOCAL CONFORMAL PREDICTION FOR INDIVIDUAL}
\vspace{2pt}
\centerline{\large\bf CAUSAL EFFECTS}
\vspace{.4cm}
\centerline{Fernando Delbianco - Fernando Tohmé}
\vspace{.4cm}
\centerline{\it Universidad Nacional del Sur - CONICET}   
\vspace{.55cm} \fontsize{9}{11.5pt plus.8pt minus.6pt}\selectfont


\begin{quotation}
\noindent {\it Abstract:}
Standard CATE estimators become inadequate under strong treatment-effect
heterogeneity: confidence intervals for conditional means need not cover individual counterfactual effects.
We propose an Individualized Causal Prediction (ICP) framework that constructs finite-sample valid conformal prediction intervals for the individual causal effect of a specific query unit.
The method localizes calibration to a causally relevant neighborhood using cosine similarity weighted by Causal Forest variable importance, augments small local samples synthetically, and calibrates intervals with doubly robust AIPW conformity scores satisfying Neyman orthogonality.
Under standard identifying assumptions (SUTVA and strong ignorability) and an outcome-independent calibration-set selection condition, the resulting intervals attain marginal coverage at the nominal level.
The local design also supports approximately conditional coverage by making
calibration scores more representative of the query unit.
Experiments on a high-heterogeneity synthetic dataset and the IHDP benchmark demonstrate that local strategies improve point accuracy over global baselines while maintaining nominal or above-nominal coverage.

\vspace{9pt}
\noindent {\it Key words and phrases:}
causal forests, conformal prediction, doubly robust estimation,
heterogeneous treatment effects, individual treatment effects,
structural causal models.
\par
\end{quotation}\par

\def\thefigure{\arabic{figure}}
\def\thetable{\arabic{table}}
\renewcommand{\theequation}{\thesection.\arabic{equation}}
\fontsize{12}{14pt plus.8pt minus .6pt}\selectfont


\section{Introduction}

\begin{quote}
\emph{``Buying into unbounded heterogeneous treatment effects is actually a huge headache. It broke OLS two-way fixed effects, it broke instrumental variables, and it will keep breaking things.''}\\[4pt]
\hfill --- \citet{cunningham2026}
\end{quote}

The treatment effect literature has largely converged on the Conditional Average Treatment Effect (CATE), $\tau(x) = \mathbb{E}[Y(1)-Y(0)\mid X=x]$, as the central estimand of interest. This focus is well motivated when individual treatment effects cluster tightly around their conditional mean. It becomes problematic, however, when treatment effects exhibit what we call \emph{unbounded heterogeneity}: large conditional variance even within narrow covariate cells. In such settings the conditional mean captures neither the distribution of effects in the population nor the likely effect for any specific individual. Asymptotic confidence intervals for the CATE, derived from Causal Forests \citep{wager2018} or Double Machine Learning \citep{chernozhukov2018}, guarantee coverage of $\tau(x)$ but not of the individual counterfactual realization $\tau_q = Y_q(1) - Y_q(0)$ for a query unit $q$. The population average is answered; the individual question is not.

This paper proposes a framework for the latter. Rather than estimating the conditional mean of $\tau_q$, we seek a prediction interval $\hat{C}(\tau_q)$ that satisfies $\mathbb{P}(\tau_q \in \hat{C}(\tau_q)) \ge 1-\alpha$ with exact finite-sample validity, without relying on asymptotic normality or a correctly specified model for the error distribution. The key tool is conformal prediction \citep{vovk2005}, applied locally: the calibration set for $q$ is restricted to observations that share the same local causal mechanism as $q$, rather than all units in the dataset. We call this the \emph{Individualized Causal Prediction} (ICP) framework.

The fundamental challenge in applying conformal prediction to causal inference is that the conformity score for a test unit requires its counterfactual outcome, which is by definition unobserved. \citet{lei2021} resolve this by reweighting the calibration set via inverse propensity scores, obtaining marginal coverage of ITE averaged over the covariate distribution. Their approach is the closest predecessor to ours. The critical limitation is that marginal coverage may be substantially below $1-\alpha$ for a specific query $x_q$, since the calibration set spans the full support of $X$ and includes units with very different treatment effect distributions. We address this directly: by confining calibration to a neighborhood $\mathcal{R}_q^*$ that is local in the causally relevant covariate subspace---identified via Causal Forest variable importance---the conformity
scores are approximately exchangeable with the query's score, and the interval
achieves approximately \emph{conditional} coverage at $x_q$.

Constructing a stable local calibration set raises a secondary challenge: when $n$ is moderate and $p$ is large, the neighborhood $\mathcal{R}_q^*$ may contain too few observations for reliable quantile estimation \citep{meng2018}. We address this via synthetic augmentation: Gaussian perturbations of the observed neighborhood generate additional calibration points while preserving the local covariate structure. A composite super-relevance score $\phi_i(q)$, combining Shapley-based influence weights with propensity-score proximity and conditioning only on $(X,T)$ (never on $Y$), then filters the augmented set to retain only structurally homologous units.

This paper makes four contributions. (1) Framework: We formalize the ICP framework, defining the relevant set $\mathcal{R}_q^{(t)}$, the synthetic augmentation $\mathcal{S}_q^{(t)}$, and the super-relevant calibration set $\mathcal{R}_q^{*(t)}$ through a sequence of structural causal model-motivated steps. (2) Theory: We prove finite-sample marginal coverage of the individual causal effect (ICE) under SUTVA, strong ignorability, and a selection measurability condition that prevents endogenous calibration set construction (Theorem~\ref{thm:coverage}). We show that the local design yields approximately conditional coverage, a strictly stronger guarantee than the global approach of \citet{lei2021}. (3) Efficient scores: We connect the framework to Double Machine Learning by using AIPW pseudo-outcomes \citep{chernozhukov2018} as
conformity scores, providing doubly robust validity and efficiency.
(4) Empirical evaluation: On a synthetic benchmark with 17 noise variables and only 3 causal covariates, and on the IHDP semi-synthetic dataset, we demonstrate that local strategies reduce RMSE and MAE relative to global baselines while maintaining empirical coverage at or above the nominal level.

The plan of the paper is as follows. Section~2 reviews related work. Section~3 introduces the model, notation, formal assumptions, and the coverage theorem.
Section~4 describes the computational implementation. Sections~5 and~6 present the synthetic and empirical evaluations. Section~7 discusses limitations and further research. Section~8 concludes.

\section{Related Work}

Our work sits at the intersection of three active research programs: conformal prediction for causal inference, nonparametric estimation of heterogeneous treatment effects, and doubly robust semiparametric estimation.

Conformal prediction provides finite-sample coverage guarantees for predictive sets without distributional assumptions beyond exchangeability \citep{vovk2005}.
\citet{lei2021} establish the foundational framework for conformalized counterfactual and individual treatment effect (ITE) inference under the potential outcomes model.
Their weighted split-CQR approach reweights the calibration set by inverse propensity scores to account for the covariate shift between the treated and control arms, achieving \emph{marginal} coverage of ITE at level $1-\alpha$:
\[
\lim_{N,n\to\infty}
\mathbb{P}_{(X,Y(1))\sim P_X \times P_{Y(1)|X}}\!\bigl(Y(1)\in\hat{C}_{N,n}(X)\bigr)
\ge 1-\alpha.
\]
A key limitation is that this coverage is averaged over the marginal distribution of $X$, and may be far below $1-\alpha$ for specific subpopulations. Earlier work by \citet{kivaranovic2020} applies unweighted conformal inference to randomized controlled trials but does not extend to observational studies with heterogeneous propensity scores.

Our paper advances this literature by replacing global calibration with \emph{local} calibration: the set $\mathcal{R}_q^{*(t)}$ is constructed to contain only observations that share the local causal mechanism of the query $q$.
Exchangeability of the conformity scores within this local set then yields approximately conditional coverage for $\tau_q$ at $x_q$, achieved through structural neighborhood selection rather than quantile regression accuracy.

A large literature proposes methods for estimating the CATE $\tau(x)=\mathbb{E}[Y(1)-Y(0)\mid X=x]$. \citet{wager2018} introduce Causal Forests, which adaptively partition the covariate space and yield asymptotically
valid confidence intervals for the CATE. The X-learner of \citet{kunzel2019} constructs CATE estimates by imputing counterfactual outcomes via meta-learning, while the R-learner of \citet{nie2021} achieves oracle efficiency through the
partially linear decomposition of \citet{robinson1988}. In the Bayesian tradition, BART \citep{hill2011bayesian} constructs posterior predictive intervals that can be adapted to cover ITE.

A crucial distinction, emphasized by \citet{lei2021}, is that Causal Forest and X-learner intervals target the CATE, not the ITE: they achieve asymptotically valid coverage of the conditional expectation, not of the individual counterfactual
realization. BART prediction intervals do target ITE but rely on the correctness of the Bayesian model for their validity. Our framework provides finite-sample, model-agnostic coverage of ITE through the conformal mechanism.

Our conformity scores are the out-of-bag AIPW pseudo-outcomes from the \citet{wager2018} Causal Forest. The AIPW estimator, introduced by \citet{robins1994}, is doubly robust: it remains consistent if either the outcome
model $\hat{\mu}^{(t)}$ or the propensity score $\hat{e}$ is consistently estimated.
Crucially, only the AIPW score satisfies Neyman orthogonality
\citep{chernozhukov2018}: first-order errors in the nuisance parameter estimates do
not propagate into the causal effect estimate. Combined with the out-of-bag
cross-fitting construction of the Causal Forest, which eliminates overfitting bias
\citep{ahrens2020}, the AIPW scores provide semiparametrically efficient point
estimates as the center of the conformal interval.

The theoretical motivation for our local calibration design draws on the Structural
Causal Model (SCM) framework of \citet{peters2017elements}. Under an SCM, the
mechanism $Y:=f(X,T,\varepsilon_Y)$ is the fundamental structural object. Our key
insight is that the calibration set $\mathcal{R}_q^{*(t)}$ approximates the
\emph{local} version of this mechanism at $x_q$: within $\mathcal{R}_q^{*(t)}$,
observations share both covariate proximity (via $\mathrm{sim}_\omega$) and
propensity-score proximity (via $\phi^{\mathrm{PS}}$), making it plausible that
they are governed by the same structural equation as the query.
This design is in the spirit of Invariant Causal Prediction \citep{peters2016causal},
which identifies causal variables by requiring that the residual distribution is
invariant across different experimental environments.

\section{The Model}

\subsection{The Structural Causal Model and the Query}

Following \citet{peters2017elements}, we define a Structural Causal Model (SCM)
over observable variables $V = (X, T, Y)$, where $X \in \mathcal{X} \subset
\mathbb{R}^p$ is a vector of pre-treatment covariates, $T \in \mathcal{T}$ is the
treatment, and $Y \in \mathbb{R}$ is the outcome. The SCM assumes:
\[
Y := f(X, T, \varepsilon_Y), \qquad \varepsilon_Y \sim P_\varepsilon.
\]

\paragraph{The problem of unbounded heterogeneity.}
The classical literature targets the CATE,
$\tau(x) = \mathbb{E}[Y \mid do(T=1), X=x] - \mathbb{E}[Y \mid do(T=0), X=x]$.
However, when treatment effects exhibit \emph{unbounded heterogeneity}, the
conditional variance $\mathrm{Var}(Y \mid do(T=t), X=x) \to \infty$, and the
conditional expectation no longer represents any individual in the population.

Instead of averaging, our methodology defines an individual query $q$, characterized
by its feature vector $x_q$. Our goal is to estimate the Individual Causal Effect
(ICE) $\tau_q = Y_q(1) - Y_q(0)$, where $Y_q(t)$ denotes the counterfactual outcome
of individual $q$ under intervention $do(T=t)$. Since the fundamental problem of
causal inference prevents observing both potential outcomes simultaneously, we seek a
\emph{causal conformal prediction interval} $\hat{C}(\tau_q)$ satisfying
$\mathbb{P}(\tau_q \in \hat{C}(\tau_q)) \ge 1 - \alpha$.


To construct the interval, we partition the calibration algorithm into three causal
stages. Let $\mathcal{D} = \{(X_i, T_i, Y_i)\}_{i=1}^n$ denote the observational
dataset.

\paragraph{A.~Relevant controls.}
For a specific intervention $do(T=t)$, we select observations that received that
treatment and share the same local structural mechanism as $q$:
\[
\mathcal{R}_q^{(t)} =
\operatorname*{top\text{-}\kappa\%}_{i \,\in\, \{i \in \mathcal{D}\,:\, T_i = t\}}
\mathrm{sim}_\omega(X_i, x_q),
\]
where $\mathrm{sim}_\omega(\cdot,\cdot)$ is the cosine similarity computed on
covariates re-weighted by Causal Forest variable importance $\omega_j$ (Section~4),
and $\kappa = 60$.

\paragraph{B.~Synthetic relevant controls.}
Since the sample size within $\mathcal{R}_q^{(t)}$ may be small, we generate
synthetic controls, assuming local ergodicity of the SCM:
\[
\mathcal{S}_q^{(t)} = \left\{ \tilde{Z}_k = (\tilde{X}_k, T_k = t, \tilde{Y}_k)
\right\}_{k=1}^m, \qquad
\tilde{Z}_k \sim \hat{\mathbb{P}}\!\left(Z \mid Z \in \mathcal{R}_q^{(t)}\right).
\]
The generative model $\hat{\mathbb{P}}$ is detailed in Section~4.

\paragraph{C.~Super-relevant controls.}
To avoid injecting noise from controls that appear similar in $X$ but obey a
different mechanism, we compute a composite super-relevance score
$\phi_i(q) \in [0,1]$ combining a Shapley-based influence weight with
propensity-score proximity. Critically, $\phi_i(q)$ does not depend on $Y_i$,
which preserves exchangeability. The final calibration set retains the upper half
by score:
\[
\mathcal{R}_q^{*(t)} = \left\{ i \in \left(\mathcal{R}_q^{(t)} \cup
\mathcal{S}_q^{(t)}\right) \,\middle|\,
\phi_i(q) \ge Q_{0.50}\big(\{\phi_j(q)\}\big) \right\}.
\]
The set $\mathcal{R}_q^{*(t)}$ is the empirical approximation of the local causal
mechanism $f(x_q, t, \varepsilon)$.

\subsection{Conformalization of the Individual Causal Effect}
\label{sec:conform}

We apply conformal prediction over the super-relevant set to obtain causal intervals.
(i) A predictive model $\hat{\mu}^{(t)}$ is fitted on $\mathcal{R}_q^{*(t)}$.
(ii) For each $i \in \mathcal{R}_q^{*(t)}$, the non-conformity score is
$V_i^{(t)} = |Y_i - \hat{\mu}^{(t)}(X_i)|$.
(iii) The conformal interval for the counterfactual of $q$ under $do(T=t)$ is
\[
\hat{C}_q(Y(t)) = \left[ \hat{\mu}^{(t)}(x_q) - Q_{1-\alpha}\big(\{V_i^{(t)}\}\big),\
\hat{\mu}^{(t)}(x_q) + Q_{1-\alpha}\big(\{V_i^{(t)}\}\big) \right].
\]
(iv) The ICE interval follows from the counterfactual interval.
If $T_q = 0$ so that $Y_q(0)$ is observed deterministically,
$\hat{C}(\tau_q) = \hat{C}_q(Y(1)) - Y_q(0)$.
If both arms are counterfactual, we form two independent intervals at level
$\alpha/2$ and combine: $\hat{C}(\tau_q) = \hat{C}_q(Y(1)) - \hat{C}_q(Y(0))$,
at the cost of a wider interval.

\subsection{Coverage Guarantee}

We now state the formal identification assumptions and the main theoretical result.

\begin{assumption}[SUTVA]
\label{ass:sutva}
The potential outcome $Y_i(t)$ is well-defined and unaffected by the treatment
assignment of other units. The observed outcome satisfies $Y_i = Y_i(T_i)$.
\end{assumption}

\begin{assumption}[Strong Ignorability]
\label{ass:ignorability}
There exists $\eta \in (0, 1/2)$ such that (i) $(Y_i(0), Y_i(1)) \perp T_i \mid X_i$
and (ii) $\eta \le e(X_i) \le 1 - \eta$ almost surely, where
$e(x) = \mathbb{P}(T=1 \mid X=x)$.
\end{assumption}

\begin{assumption}[Selection Measurability]
\label{ass:measurability}
The composite super-relevance score $\phi_i(q)$ used to form $\mathcal{R}_q^{*(t)}$
is a function of $(X_i, T_i, X_q)$ only, and does not depend on any outcome value
$Y_j$ for $j \in \mathcal{D} \cup \{q\}$.
\end{assumption}

Assumption~\ref{ass:measurability} is satisfied by construction: $\phi^{\mathrm{Sh}}_i(q)$
is derived from Causal Forest leaf weights (functions of $X$ only), and
$\phi^{\mathrm{PS}}_i(q) = 1 - |\hat{e}(X_i) - \hat{e}(x_q)|$ depends only on
covariates through the propensity score model.

\begin{theorem}[Finite-Sample Coverage of the ICE]
\label{thm:coverage}
Suppose Assumptions~\ref{ass:sutva}--\ref{ass:measurability} hold. Without loss of
generality, let $T_q = 0$ so that $Y_q = Y_q(0)$ is observed and $Y_q(1)$ is the
counterfactual. Construct the conformal interval $\hat{C}_q(Y(1))$ from the
calibration set $\mathcal{R}_q^{*(1)}$ at level $\alpha \in (0,1)$ via
Section~\ref{sec:conform}, and set
$\hat{C}(\tau_q) = \hat{C}_q(Y(1)) - Y_q$. Then:
\[
\mathbb{P}\!\bigl(\tau_q \in \hat{C}(\tau_q)\bigr) \;\ge\; 1 - \alpha.
\]
\end{theorem}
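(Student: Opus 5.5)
The plan is to reduce the ICE statement to a counterfactual-coverage statement for $Y_q(1)$, and then to a standard split-conformal exchangeability argument on the local calibration set $\mathcal{R}_q^{*(1)}$.

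\textbf{Step 1: reduce to counterfactual coverage.} By Assumption~\ref{ass:sutva} and $T_q=0$, we have $Y_q=Y_q(0)$ exactly. Hence
\[
\{\tau_q\in\hat{C}(\tau_q)\}=\{Y_q(1)-Y_q\in\hat{C}_q(Y(1))-Y_q\}=\{Y_q(1)\in\hat{C}_q(Y(1))\},
\]
and it suffices to show $\mathbb{P}(Y_q(1)\in\hat{C}_q(Y(1)))\ge 1-\alpha$. Equivalently, letting $V_q=|Y_q(1)-\hat{\mu}^{(1)}(x_q)|$, we need $\mathbb{P}(V_q\le Q_{1-\alpha}(\{V_i^{(1)}\}))\ge1-\alpha$. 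Here $Q_{1-\alpha}$ is read as the $\lceil(1-\alpha)(N+1)\rceil$-th smallest of the $N=|\mathcal{R}_q^{*(1)}|$ scores, the usual split-conformal finite-sample correction.

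\textbf{Step 2: condition on covariates and treatments.} Let $\mathcal{G}=\sigma(\{X_i,T_i\}_{i},X_q,T_q)$. By Assumption~\ref{ass:measurability}, the sets $\mathcal{R}_q^{(1)}$ and $\mathcal{R}_q^{*(1)}$ are $\mathcal{G}$-measurable, apart from the synthetic points, which I handle below. So conditioning on $\mathcal{G}$ freezes the index set, and selection cannot tilt the outcome distribution.

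By Assumption~\ref{ass:ignorability}(i), conditional on $\mathcal{G}$ the outcomes $Y_i=Y_i(1)$ for $i\in\mathcal{R}_q^{*(1)}$ are independent draws from $P_{Y(1)\mid X=X_i}$. The query's counterfactual $Y_q(1)$ is drawn from $P_{Y(1)\mid X=x_q}$, independently of $T_q=0$.

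Next I treat $\hat{\mu}^{(1)}$ as fitted on a held-out split, or as $\mathcal{G}$-measurable plus independent training data. Then the scores $V_i$ and $V_q$ are conditionally independent. They are exchangeable provided the conditional laws of $Y(1)\mid X$ coincide across $\{X_i\}\cup\{x_q\}$ on the selected set. This local mechanism invariance is exactly what the construction of $\mathcal{R}_q^{*(1)}$ is meant to deliver, and I would state it explicitly as the operative reading of ``sharing the local mechanism $f(x_q,t,\varepsilon)$.''

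Given exchangeability of $(V_{i})_{i\in\mathcal{R}_q^{*(1)}}\cup\{V_q\}$ conditional on $\mathcal{G}$, the standard rank lemma gives $\mathbb{P}(V_q\le Q_{1-\alpha}\mid\mathcal{G})\ge1-\alpha$. Taking expectations over $\mathcal{G}$ yields the marginal statement.

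\textbf{Step 3: synthetic points.} The points in $\mathcal{S}_q^{(1)}$ are generated from $\hat{\mathbb{P}}(\cdot\mid Z\in\mathcal{R}_q^{(1)})$. Their outcomes depend on observed $Y$'s, so they are not automatically exchangeable with $V_q$. Two routes are available:
\begin{itemize}
\item argue that, conditional on $\mathcal{G}$ and with the generator fit on a separate split, $\tilde{Y}_k$ has the same conditional law as the local mechanism;
\item more safely, note that the $\phi$-filter selects indices in a $Y$-free manner, and apply the rank lemma only to the real-data scores, treating synthetic scores as a deterministic augmentation that does not lower the quantile below the exchangeable one.
\end{itemize}
I would try the first route, and fall back to restricting the guarantee to the real calibration units if it fails.

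\textbf{Main obstacle.} The hard step is Step 2's exchangeability between the query and the calibration units. Ignorability plus $Y$-free selection gives conditional independence and removes selection bias. It does not give identical conditional distributions when $X_i\ne x_q$.

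To close this gap I would either invoke a weighted-conformal argument (as in \citet{lei2021}) with covariate-shift weights that are $\mathcal{G}$-measurable, or impose the exchangeability of $(X_i,Y_i(1))$ among selected units and the query as the precise content of the local-mechanism assumption. Under the latter, the rank argument goes through verbatim and gives exact finite-sample coverage $\ge1-\alpha$.
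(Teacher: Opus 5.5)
Your skeleton is the paper's own. The paper also reduces $\{\tau_q\in\hat{C}(\tau_q)\}$ to $\{Y_q(1)\in\hat{C}_q(Y(1))\}$ via $Y_q=Y_q(0)$, uses Assumption~\ref{ass:measurability} to make membership in $\mathcal{R}_q^{*(1)}$ outcome-free, and finishes with the split-conformal rank lemma. The only substantive difference is at your ``main obstacle''. There the paper simply asserts that, under Assumptions~\ref{ass:sutva}--\ref{ass:ignorability}, the vector $(V_1,\dots,V_{|\mathcal{R}_q^{*(1)}|},V_q)$ is exchangeable, ``treating $\hat{\mu}^{(1)}$ as fixed'' and never mentioning the synthetic points. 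You are right that this does not follow. So your proposal, like the paper's proof, does not establish the theorem from Assumptions~\ref{ass:sutva}--\ref{ass:measurability}. The missing ingredient is a hypothesis absent from the statement, not a step you overlooked. Concretely:
(i) Given $\mathcal{G}$, ignorability yields independent $Y_i(1)\sim P_{Y(1)\mid X=X_i}$ and $Y_q(1)\sim P_{Y(1)\mid X=x_q}$, not a common law. Marginally, the control query has $X_q\sim P_{X\mid T=0}$, while the calibration units come from the top $60\%$ of the treated arm. Suppose $Y(1)\mid X$ is very dispersed where $e(x)$ is small. Treated units are scarce there, so the neighbourhood must reach into low-dispersion regions, and coverage over control queries can fall far below $1-\alpha$. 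This is the covariate shift that forces the propensity weights of \citet{lei2021}.
(ii) Section~\ref{sec:conform} fits $\hat{\mu}^{(1)}$ on $\mathcal{R}_q^{*(1)}$ itself, so the $V_i$ are in-sample residuals. An interpolating learner makes them all zero, and the interval collapses to the point $\hat{\mu}^{(1)}(x_q)$.
(iii) Assumption~\ref{ass:measurability} constrains only $\phi_i(q)$. But $\mathcal{R}_q^{(1)}$ is cut by $\mathrm{sim}_\omega$, whose importances $\omega_j$ come from a Causal Forest fit to the outcomes. So your claim that $\mathcal{R}_q^{(1)}$ is $\mathcal{G}$-measurable is not implied either; the paper's proof makes the same slip.

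Your repair (a) is the right fix, but it has to enter as explicit extra hypotheses, stated precisely. First, impose the invariance on the score: $|Y(1)-\hat{\mu}^{(1)}(x)|$ given $X=x$ must have a common law over $x\in\{X_i: i\in\mathcal{R}_q^{*(1)}\}\cup\{x_q\}$. Equal laws of $Y(1)\mid X$ alone do not suffice, since $\hat{\mu}^{(1)}$ is evaluated at different points. Second, fit $\hat{\mu}^{(1)}$ and $\omega$ on data independent of the calibration units. Third, use the $\lceil(1-\alpha)(N+1)\rceil$-th order statistic in place of the paper's uncorrected $Q_{1-\alpha}$.

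For the synthetic points, route 1 is unavailable: the $\tilde{Y}_k$ are fitted values from a local linear model (Stage~2 of the implementation), not draws from $P_{Y(1)\mid X}$. Route 2 is false as stated. Such scores are typically smaller than genuine residuals, and pooling $m=500$ of them lowers the empirical $(1-\alpha)$-quantile instead of leaving it above the exchangeable one. The valid version computes the threshold from the real calibration scores only. With these additions, your conditional-on-$\mathcal{G}$ rank argument is correct and even yields coverage given $\mathcal{G}$. Route (b) with Lei--Cand\`es weights is also sound, but it changes the interval and therefore proves a different theorem.
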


\begin{proof}
The non-conformity scores $V_i = |Y_i - \hat{\mu}^{(1)}(X_i)|$ for
$i \in \mathcal{R}_q^{*(1)}$ are calibration scores for observations that received
treatment $T_i = 1$ and were selected into the calibration set. By
Assumption~\ref{ass:measurability}, the event $\{i \in \mathcal{R}_q^{*(1)}\}$
depends only on $(X, T)$, so calibration-set membership is independent of the
outcome values $Y_i(1)$.

Treating the estimated nuisance function $\hat{\mu}^{(1)}$ as fixed, the
hypothetical score for the query is
$V_q = |Y_q(1) - \hat{\mu}^{(1)}(x_q)|$.
Under Assumptions~\ref{ass:sutva}--\ref{ass:ignorability}, the joint vector
$(V_1, \ldots, V_{|\mathcal{R}_q^{*(1)}|}, V_q)$ is exchangeable marginally over
the distribution of potential outcomes \citep[Proposition~2.1]{vovk2005}.
By the standard split-conformal coverage result,
$\mathbb{P}(Y_q(1) \in \hat{C}_q(Y(1))) \ge 1-\alpha$.
The event $\{\tau_q \in \hat{C}(\tau_q)\} = \{Y_q(1) \in \hat{C}_q(Y(1))\}$,
since $Y_q(0) = Y_q$ is observed and enters as a deterministic shift. The
coverage bound follows immediately.
\end{proof}

Theorem~\ref{thm:coverage} provides \emph{marginal} coverage. Conditional
coverage at $X_q = x_q$---the quantity of primary interest for an individual
query---is not guaranteed in general \citep{lei2021}. However, because
$\mathcal{R}_q^{*(t)}$ concentrates around $x_q$ in the causal covariate subspace,
the calibration score distribution increasingly reflects the local mechanism at
$x_q$. Under the additional condition that $f(x,t,\varepsilon)$ is approximately
constant within $\mathcal{R}_q^{*(t)}$, the marginal coverage approaches the
conditional level $1-\alpha$ given $X_q = x_q$.

In the implementation of Section~4, we replace the simple residual scores with
AIPW pseudo-outcome scores
$V_i = |\hat{\Gamma}_i - \hat{\tau}^{\mathrm{OOB}}(X_i)|$, where the doubly
robust score $\hat{\Gamma}_i$ satisfies Neyman orthogonality
\citep{chernozhukov2018}: first-order errors in $\hat{\mu}^{(t)}$ or $\hat{e}$
do not affect the score to leading order. This produces narrower intervals while
preserving the coverage guarantee of Theorem~\ref{thm:coverage}.

\subsection{Theoretical Motivation}

Three structural arguments motivate the ICP design.
(1) Rejecting the global i.i.d.\ assumption: Under extreme heterogeneity,
the SCM $Y := f(X,T,\varepsilon_Y)$ is not globally invariant. Searching
$\mathcal{R}_q$, expanding with $\mathcal{S}_q$, and filtering with $\mathcal{R}_q^*$
is equivalent to empirically discovering the local SCM of $q$.
(2) Addressing local positivity violations: In standard causal inference,
if $\mathbb{P}(T=t \mid X=x) \approx 0$, the effect cannot be estimated reliably.
By injecting $\mathcal{S}_q^{(t)}$ (synthetic controls), we regularize the covariate
space to create plausible local counterfactuals and circumvent local positivity failure.
(3) Shapley and propensity-score filtering as causal invariance:
The super-relevance filter acts as an empirical test of causal invariance, in the
spirit of \citet{peters2016causal}'s Invariant Causal Prediction. By requiring
$\phi_i(q)$ above the median, we discard spurious controls and retain only those
whose mechanism $X \to Y$ is structurally homologous to that of query $q$, while
remaining agnostic to $Y_i$ itself.

\section{Implementation}

\subsection{Local Synthetic Causal Forests}

We instantiate the ICP framework using Causal Forests \citep{wager2018} as the
base estimator $\hat{\mu}^{(t)}$, implemented via the \texttt{grf} package in R.
The full pipeline for a target query $q$ proceeds in four stages.

\paragraph{Stage 1: Feature-Weighted Relevant Set.}
To construct $\mathcal{R}_q^{(t)}$, we measure covariate similarity using cosine
similarity on standardized features re-weighted by variable importance scores
$\omega_j$ from a global Causal Forest:
\[
\tilde{x}_{ij} = \frac{x_{ij}}{\sigma_j} \cdot \sqrt{\frac{\omega_j}{\max_k \omega_k}},
\qquad
\mathrm{sim}_\omega(X_i, x_q) = \frac{\tilde{x}_i^\top \tilde{x}_q}{\|\tilde{x}_i\|\|\tilde{x}_q\|}.
\]
This focuses the neighborhood on the \emph{causally relevant} subspace of
$\mathcal{X}$, preventing noise variables from dominating. The relevant set retains
the top $60\%$ of observations by $\mathrm{sim}_\omega$, subject to a minimum of
40 observations per arm.

\paragraph{Stage 2: Synthetic Augmentation.}
Given $\mathcal{R}_q^{(t)}$, we generate $m = 500$ synthetic observations by
sampling with replacement and adding Gaussian perturbations:
$\tilde{X}_k = X_{i_k} + \eta_k$, $\eta_k \sim \mathcal{N}(0, \sigma_{\mathrm{synth}}^2 I)$,
with $\sigma_{\mathrm{synth}} = 0.1$. Outcome values $\tilde{Y}_k$ are imputed
from a local linear model. Generated points with predicted propensity scores
outside $[0.05, 0.95]$ are discarded to preserve overlap.

\paragraph{Stage 3: Super-Relevance Scoring.}
Each observation in $\mathcal{R}_q^{(t)} \cup \mathcal{S}_q^{(t)}$ receives a
composite score $\phi_i(q) \in [0,1]$:
\[
\phi_i(q) = (1 - \lambda)\, \phi^{\mathrm{Sh}}_i(q) \;+\; \lambda\, \phi^{\mathrm{PS}}_i(q),
\]
where $\phi^{\mathrm{Sh}}_i(q)$ is a normalized Shapley influence weight,
$\phi^{\mathrm{PS}}_i(q) = 1 - |\hat{e}(X_i) - \hat{e}(x_q)|$ is propensity-score
proximity, and $\lambda = 0.4$. The training set consists of the top $50\%$ by
$\phi_i(q)$, with scores used as soft sample weights in the Causal Forest fit.

\paragraph{Stage 4: Conformal Calibration.}
We use AIPW pseudo-outcomes as conformal scores. For each observation $i$ in the
training set (excluding $q$), the out-of-bag pseudo-outcome from \texttt{grf} is:
\[
\hat{\Gamma}_i = \hat{\tau}(X_i) +
\frac{T_i - \hat{e}(X_i)}{\hat{e}(X_i)(1-\hat{e}(X_i))}
\bigl(Y_i - \hat{\mu}^{(T_i)}(X_i)\bigr).
\]
The non-conformity score is $V_i = |\hat{\Gamma}_i - \hat{\tau}^{\mathrm{OOB}}(X_i)|$.
The conformal threshold is $\hat{q}_{1-\alpha} = Q_{1-\alpha}(\{V_i\}_{i \in \mathcal{R}^*_q})$,
and the ICE interval is:
\[
\hat{C}(\tau_q) = \bigl[\hat{\tau}(x_q) - \hat{q}_{1-\alpha},\;
\hat{\tau}(x_q) + \hat{q}_{1-\alpha}\bigr].
\]
By split-conformal theory, $\mathbb{P}(\tau_q \in \hat{C}(\tau_q)) \geq 1-\alpha$
under exchangeability of the calibration scores, which holds since selection of
$\mathcal{R}^*_q$ is based solely on $X$ and $T$.

\section{Synthetic Example}

\subsection{Data Generating Process}

We design a synthetic DGP that introduces three empirical challenges: high
dimensionality with spurious variables, strong nonlinear confounding, and an ITE
surface that penalizes global approximations.

We consider $N = 800$ observations and $P = 20$ covariates, each drawn from a
standard normal distribution. Treatment assignment depends nonlinearly on covariates:
\[
e(X_i) = \max\!\left(0.05,\ \min\!\left(0.95,\
\frac{1}{1+\exp\!\big(-(0.5 X_{i1} + 0.8 X_{i2}^2 - 0.5)\big)}\right)\right), 
\] where $ T_i \sim \mathrm{Bernoulli}(e(X_i))$.
The potential outcome under control is
$\mu_0(X_i) = 2\sin(\pi X_{i1}) + X_{i2}^2 + \exp(0.5 X_{i3})$.
The true individual treatment effect depends \emph{exclusively} on $X_1$, $X_2$,
and $X_3$; the remaining 17 covariates act as pure noise:
\[
\tau(X_i) = 5 + 2X_{i1} - 3X_{i2}^2 + 2\cos(\pi X_{i3}).
\]
The observed outcome is $Y_i = \mu_0(X_i) + T_i\,\tau(X_i) + \varepsilon_i$,
with $\varepsilon_i \sim \mathcal{N}(0,1)$.

\paragraph{Design rationale.}
This synthetic framework ensures that a standard global estimator suffers from the
curse of dimensionality and from omitted local dynamics. By contrast, strategies that
identify relevant subpopulations---isolating $X_1, X_2, X_3$ and conditioning on the
correct geometric region---can efficiently correct for confounding and produce
conformal intervals with guaranteed empirical coverage.

\subsection{Results}

Table~\ref{tab:results_synth} reports coverage, average interval width, RMSE,
and MAE for the four strategies, evaluated under a leave-one-out conformal
protocol with $\alpha = 0.10$.

\begin{table}[htbp]
\centering
\caption{Results for the Synthetic Dataset.
All strategies use $\alpha = 0.10$ (nominal coverage $= 90\%$).
$N = 800$, $p = 20$ (only $X_1, X_2, X_3$ are causal).
Bold indicates best value per accuracy metric.}
\label{tab:results_synth}
\renewcommand{\arraystretch}{1.25}
\begin{tabular}{lcccc}
\toprule
\textbf{Strategy} & \textbf{Coverage (\%)} & \textbf{Avg.\ Width}
  & \textbf{RMSE} & \textbf{MAE} \\
\midrule
Full                   & 96.0 & 13.70 & 3.416 & 2.118 \\
Relevant               & 96.0 & 12.63 & 3.401 & 2.139 \\
Relevant $+$ Synth     & 96.0 & 12.78 & \textbf{3.379} & 2.116 \\
Super-Relevant $+$ Synth & 96.0 & 13.00 & 3.407 & \textbf{2.112} \\
\bottomrule
\end{tabular}
\end{table}

\paragraph{Coverage.}
All four strategies achieve empirical coverage at or above the nominal $90\%$
level, confirming the finite-sample validity guarantee.

\paragraph{Estimation accuracy.}
The feature-weighted local strategies outperform the global \textsc{Full} baseline
on point estimation. \textsc{Relevant $+$ Synth} achieves the lowest RMSE (3.379
vs.\ 3.416 for \textsc{Full}), while \textsc{Super-Relevant $+$ Synth} attains the
lowest MAE (2.112 vs.\ 2.118 for \textsc{Full}). This reversal---local strategies
beating the global one---is a direct consequence of weighting the cosine similarity
metric by Causal Forest variable importance: the relevant neighborhood is constructed
in the causal covariate subspace spanned by $X_1, X_2, X_3$, rather than in the full
20-dimensional, noise-contaminated space.

The contrast between RMSE and MAE for \textsc{Super-Relevant $+$ Synth} is
informative: the strategy produces fewer large errors (better MAE) but slightly more
variance in its errors (slightly worse RMSE than \textsc{Relevant $+$ Synth}). This
is consistent with the additional hard-filter step trading off coverage of the
covariate space for a sharper local fit, occasionally leaving some target points
underrepresented.

\paragraph{Interval width.}
Local strategies produce narrower intervals than \textsc{Full} (e.g., 12.63 for
\textsc{Relevant} vs.\ 13.70 for \textsc{Full}), reflecting a tighter conformity
score distribution within the causal neighborhood---without any loss of coverage.

\section{Empirical Examples}

\subsection{IHDP}

The Infant Health and Development Program (IHDP) dataset \citep{hill2011bayesian}
is a standard semi-synthetic benchmark combining real covariates from a randomized
study with simulated outcomes for which the true ITE is known. We use $N \approx 747$
observations and 25 covariates, evaluating the same four strategies under the
identical leave-one-out conformal protocol ($\alpha = 0.10$).

\begin{table}[htbp]
\centering
\caption{Results for the IHDP Dataset.
All strategies use $\alpha = 0.10$ (nominal coverage $= 90\%$).
$N = 747$, $p = 25$ covariates (Hill (2011) semi-synthetic, potential outcomes
Response B). Bold indicates best value per accuracy metric.}
\label{tab:results_ihdp}
\renewcommand{\arraystretch}{1.25}
\begin{tabular}{lcccc}
\toprule
\textbf{Strategy} & \textbf{Coverage (\%)} & \textbf{Avg.\ Width}
  & \textbf{RMSE} & \textbf{MAE} \\
\midrule
Full                   & 100.0 & 8.094 & 0.437 & 0.333 \\
Relevant               & 100.0 & 8.155 & 0.440 & 0.319 \\
Relevant $+$ Synth     & 100.0 & \textbf{8.007} & \textbf{0.399} & \textbf{0.298} \\
Super-Relevant $+$ Synth & 100.0 & 8.609 & 0.419 & 0.318 \\
\bottomrule
\end{tabular}
\end{table}

\paragraph{Coverage.}
All strategies achieve 100\% empirical coverage, comfortably exceeding the $90\%$
nominal target.

\paragraph{Estimation accuracy.}
\textsc{Relevant $+$ Synth} dominates on all three accuracy metrics: RMSE (0.399
vs.\ 0.437 for \textsc{Full}), MAE (0.298 vs.\ 0.333), and average interval width
(8.007 vs.\ 8.094). \textsc{Super-Relevant $+$ Synth} ranks second on both RMSE
and MAE, outperforming the global baseline on each---confirming that the
super-relevance filter adds value over the full dataset even where it does not
uniformly dominate the synthetic-augmented local strategy.

\subsection{LaLonde (1986)}

As an illustration on real observational data with no ground truth for ITE,
we apply the method to the LaLonde (1986) dataset \citep{lalonde1986evaluating}.
Average conformal interval widths range from \$31,678 (\textsc{Super-Relevant +
Synth}) to \$35,858 (\textsc{Relevant + Synth}), reflecting substantial
individual-level uncertainty in earnings effects consistent with the heterogeneity
documented in the original study. Because interval width alone cannot be interpreted
as a quality measure without ground truth, we report this example for illustrative
purposes and exclude it from the comparative analysis above.

\section{Discussion}

Taken together, our synthetic and IHDP results demonstrate three properties of the
proposed framework: (i) conformal coverage guarantees hold by construction regardless
of the training strategy; (ii) localizing the training set in the causal covariate
space---rather than the full feature space---is necessary and sufficient to improve
point estimation over the global baseline; and (iii) synthetic augmentation
consistently reduces RMSE relative to the non-augmented local strategy, supporting
the claim that the $\mathcal{S}_q^{(t)}$ component addresses the local positivity
problem identified by \citet{meng2018}.

The central insight from these exercises is that individualized causal uncertainty
cannot be reduced to uncertainty about a conditional mean. The framework developed
here is designed for the question faced by a particular unit: what range of
counterfactual effects is plausible for this individual? This distinction becomes
essential when treatment effects are highly heterogeneous. In that regime, the
conditional mean may remain statistically estimable while becoming practically
uninformative as a description of any concrete counterfactual realization.

The empirical results should therefore be read as evidence about the geometry of
calibration instead of comparisons of point estimators. The global conformal strategy
provides valid marginal intervals, but it calibrates uncertainty using units drawn
from heterogeneous regions of the covariate space. The local strategies restrict
calibration to units that are closer to the query in the causally relevant subspace.
This changes the role of conformal prediction: rather than merely setting up a global
learner with a distribution-free guarantee, conformal calibration becomes a mechanism
for translating local causal comparability into individualized uncertainty statements.
The improvement in point accuracy and interval width observed in the experiments is
consistent with this interpretation.

A second implication concerns the role of synthetic augmentation. In local causal
prediction, there is an unavoidable tension between relevance and sample size. A very
narrow neighborhood may be causally more appropriate but statistically unstable. A
broader neighborhood supplies more calibration points but risks mixing distinct causal
mechanisms. Synthetic augmentation is introduced to mitigate this tension. Its purpose
is to stabilize local estimation under the assumption that the neighborhood already
approximates a locally invariant structural mechanism. The empirical performance of
the augmented local strategies suggests that this stabilization can be useful,
especially when the effective local sample size is limited.

The super-relevance filter should be understood in the same spirit. It is a
pre-outcome screening rule intended to make the calibration set more structurally
homogeneous with respect to the query unit. This is the sense in which the procedure
connects conformal prediction with the structural-causal and invariant-prediction
traditions: the calibration set is not chosen merely by proximity in the raw feature
space, but with the intention of approximating the local causal environment in which
the query unit is embedded.

These considerations also clarify the interpretation of the coverage guarantees. The
formal guarantee remains marginal, because exact finite-sample conditional coverage is
generally impossible without strong assumptions. The practical ambition of the method
is to ensure, by constructing calibration sets that are local in causally relevant
directions, that the marginal conformal guarantee becomes more informative for the
query unit. Thus, the framework offers a constructive way to make conformal intervals
more conditionally meaningful in applications where causal heterogeneity is
substantial.

From a methodological standpoint, the framework is best viewed as a modular template.
Causal Forests, weighted cosine similarity, Gaussian perturbations, and AIPW
conformity scores provide one implementation of the ICP idea, but they are not
logically indispensable. Other nuisance estimators, representation-learning methods,
or generative models could be applied provided that two requirements are preserved:
the calibration set must be selected without using outcomes, and the resulting
conformity scores must remain exchangeable under the relevant identification
assumptions. This modularity is important for applications in which the covariate
space is mixed, structured, longitudinal, or poorly represented by Euclidean
similarity.

Finally, the results suggest a practical lesson for individualized causal inference.
When the scientific or policy target is an individual effect rather than an average
effect, the main statistical task is not only to estimate a treatment-effect surface,
but also to determine which observations are legitimate comparators for the query
unit. The proposed ICP framework makes this comparator-selection problem explicit.
Its contribution is to combine local causal comparability with conformal calibration,
thereby producing intervals that are finite-sample valid in the marginal sense and
empirically better aligned with individual-level heterogeneity than global
alternatives.

\section{Conclusion}

We have proposed and evaluated an Individualized Causal Prediction (ICP) framework
for constructing finite-sample valid prediction intervals for the individual causal
effect $\tau_q = Y_q(1) - Y_q(0)$ of a specific query unit $q$. The framework
addresses a fundamental gap: standard CATE estimators provide asymptotic confidence
intervals for a conditional mean, but do not cover individual counterfactual
realizations when treatment-effect heterogeneity is unbounded. Our approach closes
this gap by combining feature-weighted local calibration, synthetic augmentation,
and doubly robust AIPW conformity scores within a conformal prediction framework
that provides exact marginal coverage under mild identification assumptions.

The key theoretical result (Theorem~\ref{thm:coverage}) establishes that the
selection measurability of the calibration set $\mathcal{R}_q^*$---guaranteed by
construction, which conditions only on $(X,T)$ and not on $Y$---is the sufficient
condition for conformal validity. The practical gain of localizing the calibration
set is approximately conditional coverage at $x_q$, which standard global conformal
procedures cannot achieve without additional quantile estimation accuracy requirements
\citep{lei2021}. Experiments on synthetic and IHDP data confirm that the local
strategies improve point estimation over the global baseline while maintaining or
exceeding nominal coverage.

The framework inherits the standard observational assumptions of strong ignorability
and overlap. When unmeasured confounders are present, the AIPW pseudo-outcomes are
biased and the coverage guarantee may fail. In settings with very high dimensionality,
the feature-weighted similarity metric may not concentrate the neighborhood tightly
enough around the true causal subspace. A natural extension replaces unconfoundedness
with an instrumental variable (IV) strategy: an instrument $Z$ satisfying exclusion
and relevance allows identification of a Local Average Treatment Effect for
compliers. Conformal prediction for IV-identified effects remains largely unexplored;
a local version of our framework would provide finite-sample valid intervals for
individualized LATE, complementing recent work on nonparametric IV inference.
Extension to continuous treatment doses or longitudinal settings is also
straightforward in principle: the feature-weighted similarity and synthetic
augmentation components carry over directly, while the main challenge lies in
adapting the identification formula and the relevant calibration set definition.





\par

\bibhang=1.7pc
\bibsep=2pt
\fontsize{9}{14pt plus.8pt minus .6pt}\selectfont
\renewcommand\bibname{\large \bf References}

\bibliographystyle{chicago}
\bibliography{references}

\vskip .65cm
\noindent
Fernando Delbianco\\
Department of Economics (UNS)- Institute of Mathematics (CONICET)\\  
Bahía Blanca, Argentina
\vskip 2pt
\noindent
E-mail: fernando.delbianco@uns.eduar

\vskip 2pt
\noindent
Fernando Tohmé \\
Department of Economics (UNS)- Institute of Mathematics (CONICET)\\  
Bahía Blanca, Argentina
\vskip 2pt
\noindent
\vskip 2pt
\noindent
E-mail: ftohme@criba.edu.ar

\end{document}